\newcommand{\ket}[1]{\ensuremath{|\,{#1}\,\rangle}}
\newcommand{\brm}[1]{\ensuremath{\boldsymbol{#1}}}
\newcommand{\esp}[1]{\ensuremath{\langle\,{#1}\,\rangle}}
\newcommand{\sinc}{\ensuremath{\mbox{\hspace{1.3pt}sinc}\,}}
\newcommand{\sint}{\ensuremath{\mbox{\hspace{1.3pt}sint}\,}}
\newcommand{\Si}{\ensuremath{\mbox{\hspace{1.3pt}Si}\,}}
\begin{document}

\title{Quantifying the non-Gaussianity of the state of spatially correlated down-converted photons}

\author{E. S. G\'{o}mez,$^{1,2,3}$ W. A. T. Nogueira,$^{1,2,3}$ C. H. Monken,$^{4}$ and G.~Lima$^{1,2,3,*}$}
\address{$^1$Departamento de F\'{i}sica, Universidad de Concepci\'{o}n, 160-C Concepci\'{o}n, Chile\\
$^2$Center for Optics and Photonics, Universidad de Concepci\'{o}n, Chile\\
$^3$MSI-Nucleus on Advanced Optics, Universidad de Concepci\'{o}n, Chile\\
$^4$Departamento de F\'{i}sica, ICEx, Universidade Federal de Minas Gerais, Belo Horizonte, Brazil\\
\href{mailto:glima@udec.cl}{\textcolor{blue}{$^*$glima@udec.cl}}}

\begin{abstract}
The state of spatially correlated down-converted photons is usually treated as a two-mode Gaussian entangled state. While intuitively this seems to be reasonable, it is known that new structures in the spatial distributions of these photons can be observed when the phase-matching conditions are properly taken into account. Here, we study how the variances of the near- and far-field conditional probabilities are affected by the phase-matching functions, and we analyze the role of the EPR-criterion regarding the non-Gaussianity and entanglement detection of the spatial two-photon state of spontaneous parametric down-conversion (SPDC). Then we introduce a statistical measure, based on the \emph{negentropy} of the joint distributions at the near- and far-field planes, which allows for the quantification of the non-Gaussianity of this state. This measure of non-Gaussianity requires only the measurement of the diagonal covariance sub-matrices, and will be relevant for new applications of the spatial correlation of SPDC in CV quantum information processing.
\end{abstract}

\ocis{(270.0270) Quantum optics; (270.5585) Quantum information and processing.}

\section{Introduction}
In spontaneous parametric down-conversion (SPDC), photon pairs are
generated with several degrees of freedom quantum correlated. In particular, the down-converted photons are spatially entangled. Due to energy and momentum conservation, the sum of the transverse momenta and the difference of the transverse positions
of the photons can be well defined, even though the position and
momentum of each photon are undefined \cite{BoydEPR,ShiEPR}. This type of Einstein-Podolsky-Rosen (EPR) correlation \cite{EPR}, has been used as a resource for fundamental studies of quantum mechanics \cite{BoydEPR,ShiEPR,FonsecaBROGLIE,Yarnall,PR}, for quantum imaging \cite{Pittman,Gatti}, and experiments of
quantum information \cite{LeoGen,BoydQu,BoydORBITAL}.

The process of SPDC has been studied extensively in the past \cite{Mandel,Klyshko,Monken1}, and much of the recent effort has been put in the quantification of the spatial entanglement for a given experimental geometry. Traditionally, this has been done through the technique of Schmidt decomposition of the two-photon wave function, which gives the Schmidt number, $K$, and the Schmidt modes allowed for each photon \cite{Eberly,Monken2,Torres,Kulik2}. While this approach describes important properties of the spatial correlation, it basically gives no information of the form of the spatial joint distributions, therefore giving no information about the Gaussianity of the spatial two-photon state of SPDC.

Moreover, it is well known that the state of the down-converted photons depends on the phase-matching conditions. Nevertheless, due to its complex structure, the phase-matching functions are usually approximated by Gaussian functions \cite{Torres,Kulik2,Ether,Paulao1,Paulao2,LeoCAM}. While the approach being adopted seems to be reasonable, it has already been shown that fine (new) structures in the spatial distributions of these photons can be observed due to (the manipulation of) the phase-matching conditions \cite{ExterNF1}.

In this work we study the non-Gaussianity of the spatial two-photon state of SPDC by properly taking into account the phase-matching conditions. We start by showing how the variances of the near- and far-field conditional probability distributions are affected by the phase-matching functions. Then, we analyze the role of the EPR-criterion \cite{EPR,Reid} regarding the non-Gaussianity and entanglement detection of this state. Even though it has been demonstrated that higher order separability criteria can be used for the entanglement detection of spatial non-Gaussian entangled states \cite{SteveNG}, we show that a proper consideration of the phase-matching function reveals, precisely, when the \emph{simpler} EPR-criterion can still be used for the spatial entanglement detection. We also show that (and when) the EPR-criterion can be used as a witness for the non-Gaussianity of this state.

Furthermore, we introduce a statistical measure, based on the \emph{negentropy} \cite{Negentropy} of the near- and far-field joint distributions, which allows for the quantification of the non-Gaussianity of the spatial two-photon state of SPDC. This measure does not correspond to a quantum mechanical generalization of the negentropy, such as the non-Gaussianity measure based on the quantum relative entropy (QRE) \cite{Banaszek,Paris}, and so does not require the knowledge of the full density matrix. Only the moments associated with the diagonal sub-matrices of the covariance matrix need to be measured. Thus, it is experimentally more accessible \cite{Eisert}. Moreover, for most of the configurations used so far, we show that its value can be estimated from the (easier to measure) marginal and conditional distributions. We also demonstrate that it has common properties with previous introduced measures of non-Gaussianity \cite{Banaszek,Paris}. The quantification of the non-Gaussianity of a quantum state has important applications for quantum information \cite{SteveNG,Banaszek,Paris}, and thus the practicality our measure shall be relevant for new applications of the spatial correlations of SPDC in this field.

\section{The phase-matching conditions and the variances of the conditional probabilities}
We consider the process of quasi-monochromatic SPDC, in the paraxial regime, for configurations with negligible Poynting vector transverse walk-off, that can be obtained using non-critical phase-matching techniques \cite{ExterNF1}.  In the momentum representation, the two-photon state is given by \cite{Mandel,Klyshko,Monken1}
\begin{equation} \label{EstMon}
|\Psi\rangle\propto \int\!\!\!\int d\brm{q}_1 d\brm{q}_2\ e^{-\frac{\zeta}{4}|\brm{q}_1+\brm{q}_2|^2} \sinc \frac{L|\brm{q}_1  -\brm{q}_2|^2}{4k_p} \ket{\brm{q}_1,\brm{q}_2},
\end{equation}
where $\ket{\brm{q}_1,\brm{q}_2}$ represents a two-photon state in plane-wave modes whose transverse wave vectors are $\brm{q}_1$ and $\brm{q}_2$.  $L$ is the crystal length, $k_p$ is the pump beam wave number, $\zeta=w_0^2-2iz/k_p$, and $w_0$ is the pump beam waist, which is located at  $z=0$. This state may be rewritten in the coordinate space as \cite{ExterNF1,ExterNF2}
\begin{equation} \label{EstPos}
|\Psi\rangle\propto\int\!\!\!\int d\boldsymbol{\rho}_1 d\boldsymbol{\rho}_2\ e^{-\frac{1}{4\zeta}|\brm{\rho}_1+\brm{\rho}_2|^2} \sint \frac{k_p|\brm{\rho}_1  -\brm{\rho}_2|^2}{4L}  \ket{\brm{\rho}_1,\brm{\rho}_2},
\end{equation} where we define the function $\sint$ as $\sint(x)\equiv \frac{2}{\pi}\int_x^{\infty} dt\,\sinc t\equiv 1-\frac{2}{\pi}\Si(x)$, $\Si(x)$ being the sine integral function. The functions $\sinc(b\,|\brm{q}|^2/2)$ and $\frac{\pi}{2b}\sint(|\brm{\rho}|^2/2b)$ form a Fourier transform pair.

The $\sinc$ and $\sint$ functions arise from the phase-matching conditions, and due to the difficulty of dealing analytically with them, they are usually approximated by Gaussian functions. The approach that has been adopted consists in approximating the function $\sinc(bx^2)$ by $e^{-\alpha b x^2}$. Sometimes it is used that $\alpha=1$ \cite{Eberly,Ether}, and in other cases the value of $\alpha$ is chosen such that both functions coincide at $1/e^2$ \cite{Torres,Paulao1,Paulao2} or at $1/e$ \cite{LeoCAM} of their peak. While this approximation seems to be reasonable for the entanglement quantification \cite{Eberly}, there has been no investigation to determine how precise it is for describing the distribution of the momentum correlations. Besides, it should be noticed that once a Gaussian approximation is adopted for the $\sinc$ function, the corresponding approximation for the $\sint$ function is already defined by the Fourier transform. Therefore, it is also not clear that such approximation is indeed good for describing the position distributions. Thus, it is not clear whether the SPDC two-photon state can indeed be written as a two-mode Gaussian state.

\begin{figure}[t]
\centering
\includegraphics[width=0.7\textwidth]{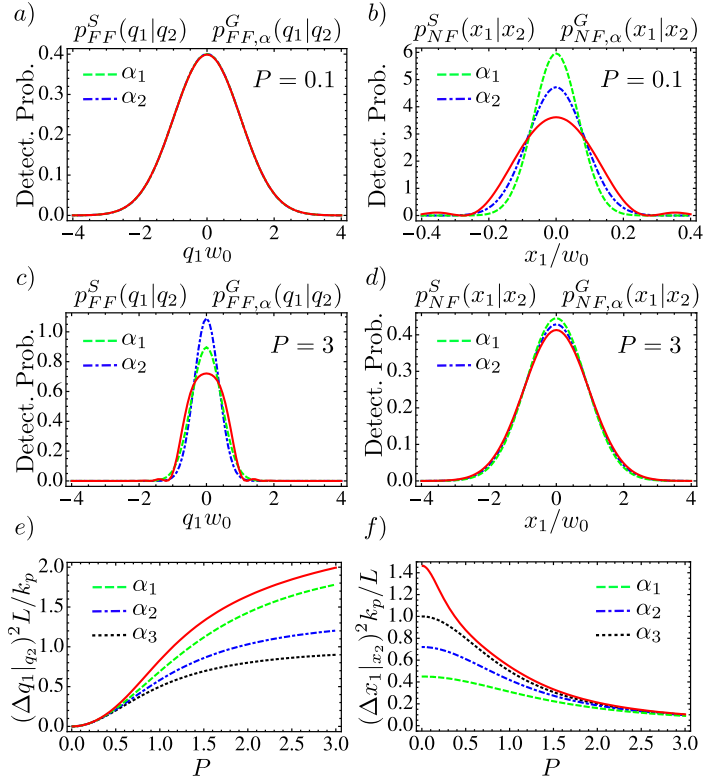}
\caption{The far- and near-field conditional probability distributions, while considering the state of SPDC ($p_{FF}^{S}$ and $p_{NF}^{S}$) and when some Gaussian approximations are assumed ($p_{FF,\alpha_{i}}^{G}$ and $p_{NF,\alpha_{i}}^{G}$). In ($a$)-($d$) the curves $p_{FF}^{S}$ and $p_{NF}^{S}$ (red solid lines) are compared with $p_{FF,\alpha_{2}}^{G}$ and $p_{NF,\alpha_{2}}^{G}$ for $\alpha_1=0.45$ (green dashed line), $\alpha_2=0.72$ (blue dot-dashed line), and for some specific values of $P$. In ($e$) and ($f$) the variances of the momentum and position conditional probabilities are plotted in terms of $P$, respectively. $\alpha_3=1$.} \label{Fig1}
\end{figure}

To investigate this point we study how the variances of the momentum (far-field) and position (near-field) conditional probability distributions are affected by the phase-matching function, and compare the obtained results with the cases where Gaussian approximations are considered. Due to the symmetry of the two-photon wave functions, there is no loss of generalization if we work in one dimension (i.e., $y_1=y_2=0$ and $q_{y1}=q_{y2}=0$). To simplify our analysis we define the following dimensionless variables: $\tilde{x}_j={x}_j/w_0$ and $\tilde{q}_j=w_0{q}_{j}$, $j=1,2$. The probabilities for coincidence detection at the far- and near-field planes are
\begin{eqnarray}
p_{FF}^S({\tilde{q}}_1,{\tilde{q}}_2)&\propto& e^{-\frac{1}{2}({\tilde{q}}_1+{\tilde{q}}_2)^2}\, \left(\sinc \frac{P^2({\tilde{q}}_1 -{\tilde{q}}_2)^2}{4}\right)^2 ,\label{pff}\\
p_{NF}^S({\tilde{x}}_1,{\tilde{x}}_2)&\propto& e^{-\frac{1}{2\sigma^2}({\tilde{x}}_1+{\tilde{x}}_2)^2}\left(\sint\frac{({\tilde{x}}_1-{\tilde{x}}_2)^2}{4P^2}\right)^2\!,\label{pnf}
\end{eqnarray} where  $\sigma^2=(z_0^2+z^2)/z_0^2$, $P=\sqrt{L/(2z_0)}$, and $z_0=k_pw_0^2/2$ is the diffraction length of the pump beam. The joint probabilities are related with the conditional (and marginal) probabilities $p_{FF}^S({\tilde{q}}_1|{\tilde{q}}_2)$ [$p_{FF}^S({\tilde{q}}_j$)] and $p_{NF}^S({\tilde{x}}_1|{\tilde{x}}_2)$ [$p_{NF}^S({\tilde{x}}_j$)], through the rule: $p({\xi}_1|{\xi}_2)=\frac{p({\xi}_1,{\xi}_2)}{p({\xi}_2)}$.

In the case of the Gaussian approximations discussed above, the probabilities of coincidence detection are
\begin{eqnarray}
p_{FF,\alpha_{i}}^{G}({\tilde{q}}_1,{\tilde{q}}_2) &\propto& e^{-\frac{1}{2}({\tilde{q}}_1+{\tilde{q}}_2)^2}\,e^{-\alpha_{i}\frac{P^2}{2}({\tilde{q}}_1-{\tilde{q}}_2)^2},\\
p_{NF,\alpha_{i}}^{G}({\tilde{x}}_1,{\tilde{x}}_2) &\propto& e^{-\frac{1}{2\sigma^2}({\tilde{x}}_1+{\tilde{x}}_2)^2}e^{-\frac{1}{2\alpha_{i}P^2}({\tilde{x}}_1-{\tilde{x}}_2)^2},
\end{eqnarray} where different values of $\alpha_i$ represent distinct Gaussian approximations for the $\sinc$ function. In Figure \ref{Fig1}(a) [(b)] we compare the curves $p_{FF}^{S}$ and $p_{FF,\alpha_{i}}^{G}$ ($p_{NF}^{S}$ and $p_{NF,\alpha_{i}}^{G}$) considering $\tilde{x}_2,\tilde{q}_2=0$, $\sigma=1$ (and the crystal centered at $z=0$), for the case where the dimensionless parameter $P=0.1$. This parameter has been used in the study of the quantification of the spatial entanglement \cite{Eberly,Monken2}, and it brings universality to the theory since a certain value of $P$ can be reached in three different ways. Here we considered the values of $L$ and $k_p$ as fixed parameters such that $P$ varies with $w_0$. In this case we find $\alpha_1=0.45$ ($\alpha_2 =0.72$) for the case where the $\sinc$ and Gaussian functions coincide at $1/e$ ($1/e^2$). From Figure \ref{Fig1}(a)-(b) one can see that for a small value of $P$, the Gaussian approximation only describes properly the momentum conditional distribution. The position conditional distribution is barely described by the approximation. In Figure \ref{Fig1}(c) and Figure \ref{Fig1}(d) we have the same type of analysis but now for a larger value of $P$. In this case, the Gaussian approximation is useful only for describing the position conditional distribution. The overall behavior of the Gaussian approximations is showed in Figure \ref{Fig1}(e) and Figure \ref{Fig1}(f), where the normalized variances of the far- [$(\Delta q_1|_{q_2})^2 L/k_p$] and near-field [$(\Delta x_1|_{x_2})^2 k_p/L$] conditional distributions are plotted in terms of $P$.

\section{The EPR-Criterion as a witness for the non-Gaussianity of the spatial two-photon state}
The near and far-field conditional probabilities can be used for implementing the EPR-paradox \cite{BoydEPR,EPR,Paulao2,Reid}. This is done by observing the violation of the inequality $(\Delta x_1|_{x_2})^2 (\Delta q_1|_{q_2})^2 \geq \frac{1}{4}$, which certifies the quantum nature of the spatial correlations of the down-converted photons. Since we have determined how the phase-matching function affects the variances $(\Delta x_1|_{x_2})^2$ and $(\Delta q_1|_{q_2})^2$, we can also look for its effect on the EPR-criterion. This is showed with the red (solid) line in Figure \ref{Fig2}, which was calculated for the values of $\tilde{x}_2$ and $\tilde{q}_2$ at the origin. For smaller values of $P$, the conditional variances are independent of the $\tilde{x}_2$ and $\tilde{q}_2$ values \cite{BoydEPR}. Whenever $P$ increases, the variances become dependent on $\tilde{x}_2$ and $\tilde{q}_2$. Nevertheless for smaller values of $\tilde{x}_2$ and $\tilde{q}_2$, which are of most experimental relevance, the red (solid) curve shown in Figure \ref{Fig2} captures the overall behavior of the product of $(\Delta x_1|_{x_2})^2$ and $(\Delta q_1|_{q_2})^2$ for the state of Eq~(\ref{EstMon}). As we can see, for values of $P$ smaller than $0.56$ or greater than $2.58$, the EPR-criterion can safely be used for detecting the spatial entanglement of the two-photon state of SPDC.

Besides of being useful as a entanglement witness, the EPR-criterion can also be used as a witness for the non-Gaussianity of the spatial state of SPDC. This emphasizes another application for this criterion, which has been related already with other quantum information tasks \cite{Reid}. To observe this, we note that for a pure two-mode Gaussian state it is possible to show that $(\Delta x_1|_{x_2})^2 (\Delta q_1|_{q_2})^2 = \frac{1}{4K}$ [see Appendix \ref{SuplementoA}], such that $\frac{1}{4}$ is an upper bound for the EPR-criterion with these states. Since it has been demonstrated in \cite{Eberly} that the two-photon state of Eq.~(\ref{EstMon}) is always entangled, we can say whenever the product of variances is greater than $\frac{1}{4}$, that it witnesses the non-Gaussianity of the entangled spatial two-photon state of SPDC. As one can see in Figure \ref{Fig2}, this happens for $0.56 \leq P \leq 2.58$. The Schmidt decomposition of the state wave function, used together with the EPR-criterion value, reveals the non-Gaussianity of a two-mode entangled state. Such observation does not necessarily hold true when other second-order moments criteria are considered. This is shown in Appendix \ref{SuplementoB} for the criterion of Ref. \cite{Mancini}, and for the states considered in Figure \ref{Fig2}.

\begin{figure}[h]
\centering
\includegraphics[width=0.7\textwidth]{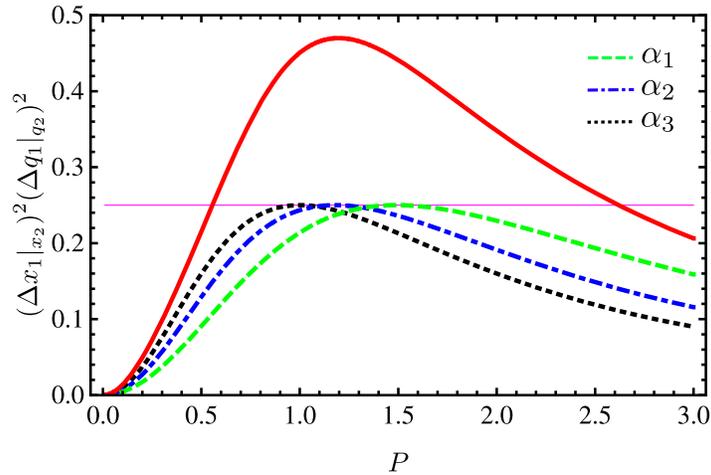}
\caption{The EPR-criterion plotted in terms of $P$. The red (solid) curve corresponds to the EPR values of the two-photon state generated in the SPDC [Eq.~(\ref{EstMon})]. The green (dotted), blue (dot-dashed) and black (dotted) curves describe the EPR-criterion for the two-photon Gaussian states defined in terms of $\alpha_1$, $\alpha_2$ and $\alpha_3$, respectively.} \label{Fig2}
\end{figure}
\section[Quantifying the non-Gaussianity of the spatial two-photon state...]{Quantifying the non-Gaussianity of the spatial two-photon state of SPDC}
From our previous analysis it is clear that the spatial state of SPDC can not be seen as a two-mode Gaussian entangled state, even when it is generated with a Gaussian pump beam and in the case of perfect phase-matching. We now proceed to quantifying the non-Gaussianity of this state. First, we introduce the concept of \emph{negentropy} which is the base of our approach \cite{Negentropy}. The negentropy of a probability density function $p(\xi_1,\xi_2)$ is defined as $ N\equiv H[p^{\tilde{G}}(\xi_1,\xi_2)]-H[p(\xi_1,\xi_2)],$ where $p^{\tilde{G}}(\xi_1,\xi_2)$ is a Gaussian distribution with the same expected values and covariance matrix of $p(\xi_1,\xi_2)$. The function $H[p(\xi_1,\xi_2)]$, called \emph{differential entropy}, is defined as $H[p(\xi_1,\xi_2)]\equiv-\!\int\! d\xi_1 d\xi_2\,p(\xi_1,\xi_2)\log_2p(\xi_1,\xi_2)$ \cite{Negentropy2}. The advantage of using negentropy is that it can be seen as the optimal estimator of non-Gaussianity, as far as density probabilities are involved. This is due to the properties that it is always non-negative, and that it is zero only for Gaussian distributions. Besides, it is invariant under invertible linear transformations \cite{Comon}.

Motivated by these properties we define the total non-Gaussianity of the spatial two-photon state of SPDC as
\begin{equation} \label{ngT}
nG^T\equiv N[p_{FF}^S({\tilde{q}}_1,{\tilde{q}}_2)]+ N[p_{NF}^S({\tilde{x}}_1,{\tilde{x}}_2)],
\end{equation} where $N[p_{FF}^S({\tilde{q}}_1,{\tilde{q}}_2)]$ and $N[p_{NF}^S({\tilde{x}}_1,{\tilde{x}}_2)]$ are the negentropies of the far- and near-field joint distributions of Eq.~(\ref{pff}) and Eq.~(\ref{pnf}), respectively. In Appendix \ref{SuplementoC} we give, explicitly, the calculations of $N[p_{FF}^S({\tilde{q}}_1,{\tilde{q}}_2)]$ and $N[p_{NF}^S({\tilde{x}}_1,{\tilde{x}}_2)]$. It is possible to observe that $nG^T=0$ if and only if $|\Psi\rangle$ is a two-mode Gaussian state [see Appendix \ref{SuplementoD}]. Otherwise $nG^T>0$. We obtain that $N[p_{FF}^S({\tilde{q}}_1,{\tilde{q}}_2)] \approx 0.15$ and that $N[p_{NF}^S({\tilde{x}}_1,{\tilde{x}}_2)] \approx 0.22$. Thus, the total non-Gaussianity of the spatial two-photon state of SPDC [Eq~(\ref{EstMon})] is $nG^T\approx0.37$. It is interesting to note that it does not depend on $P$. This was expected since the phase-matching functions do not change their functional form when $P$ varies. The calculations performed for these negentropies can be adapted for different experimental geometries, or used for the proper determination of other quantum information quantities related with the differential entropy of the spatial joint distributions \cite{Howell}. The fact that $nG^T\neq0$, emphasizes that spatial Gaussian approximations should be taken carefully due to the \textit{extremality} of Gaussian states \cite{Cirac}. For comparison purposes, we calculated in Appendix \ref{SuplementoE} the value of $\delta_B$, which is the measure of non-Gaussianity based on the QRE \cite{Banaszek,Paris}. We obtain that $\delta_B=1.08$ and such result also highlight the non-Gaussian character of the state of Eq.~(\ref{EstMon}). Furthermore, it has the same behavior of $nG^T$, since it does not depend on the parameter $P$.

In analogy with Eq.~(\ref{ngT}), we define the non-Gaussianity of the conditional and marginal distributions as: $nG^C\equiv N[p^S_{FF}(q_i|q_j)]+N[p^S_{NF}(x_i|x_j)]$ and $nG^M \equiv N[p^S_{FF}(q_i)]+N[p^S_{NF}(x_i)]$ with $i,j=1,2$ and $i \neq j$. According to these definitions, one can observe that the non-Gaussianity of the spatial state of SPDC decreases under partial trace, such that $nG^T > nG^M$; and that it is additive when the composite system is represented by a product state, i.e., if $\ket{\Psi}$ is a product state, then $nG^T=2 nG^M$ [see Appendix \ref{SuplementoD}]. These are common properties with the QRE measure of \cite{Banaszek,Paris}.

In Figure \ref{Fig3}(a) we plot the negentropies $N[p^S_{FF}(q_1|q_2)]$ and $N[p^S_{NF}(x_1|x_2)]$ as a function of $P$. It is interesting to note that these curves quantify the idea already presented in Figure \ref{Fig1}(e)-(f). As larger the value of $P$ is, the less the conditional momentum distribution can be approximated by a Gaussian function. On the other hand, the conditional position probabilities tends to a normal distribution when $P$ increases. In Figure \ref{Fig3}(b) we plot the negentropies of the near- and far-field marginal probabilities. One can see that they have a different dependence on $P$ in comparison with the conditional probabilities. Now, the near-field distribution tends to a Gaussian function for larger values of $P$, and the far-field one for smaller values of $P$. In Figure \ref{Fig3}(c) and Figure \ref{Fig3}(d) we have $nG^C$ and $nG^M$ plotted in terms of $P$. The insets of these figures show the corresponding near- and far-field distributions at the points of minimum, indicated with red circles.

\begin{figure}[t]
\centering
\includegraphics[angle=-90,width=0.8\textwidth]{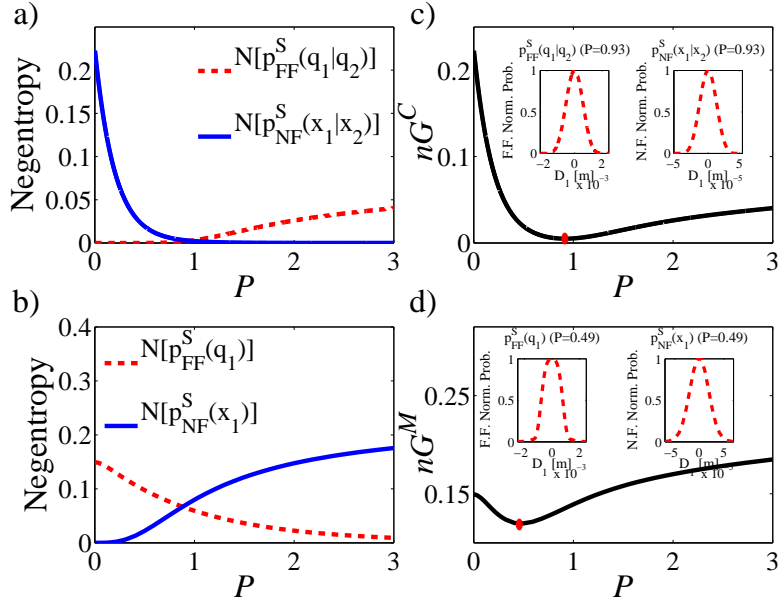}
\caption{In (a)[(b)] the negentropies of the near- and far-field conditional (marginal) distributions are plotted in terms of $P$. In (c) [(d)] we show the non-Gaussianity of the conditional (marginal) distributions.} \label{Fig3}
\end{figure}

As it is shown in Appendix \ref{SuplementoF} [See also Figure \ref{Fig3}(c) and Figure \ref{Fig3}(d)], in the limit of $P\ll1$, one can use the relation between the joint and conditional density probabilities to decompose $nG^T$ as the sum of $nG^C$ and $nG^M$:
\begin{equation} \label{ngt2}
nG^T \approx nG^C+nG^M \ \ (\textrm{iff}\ P \ll 1).
\end{equation} In a typical experimental configuration for SPDC, where the pump beam spot size is around $1$~mm at the crystal plane, the value of $P$ can be smaller than $0.05$. Thus, in general, the total non-Gaussianity of the spatial two-photon state of SPDC can be estimated in terms of the (easier to measure) near- and far-field negentropies of the conditional and marginal distributions. This simplify the measurement of $nG^T$, since there is no need to scan the whole transverse planes associated with the near- and far-field joint distributions.

\section{Conclusion}
We have investigated the spatial distributions of the entangled down-converted photons by proper considering the phase-matching conditions. By understanding how the near- and far-field conditional distributions are affected by the phase-matching function, we could show that the EPR-criterion \cite{EPR} can be used as a witness for the non-Gaussianity of the spatial state of the SPDC. The work culminated in the quantification of the non-Gaussianity of this state, which was based in a new and very experimentally accessible measure. As it has been discussed in \cite{Banaszek,Paris}, the quantification of the non-Gaussianity of a quantum state has many applications in the area of continuous variables quantum information processing. Thus, we envisage the use of our result for new applications of the spatial correlations of SPDC.

\appendix
\section{EPR-criterion for spatial Gaussian two-photon states\label{SuplementoA}}
As we mentioned in the main paper, the momentum representation of the spatial two-photon state generated in the spontaneous parametric down-conversion (SPDC) process, under paraxial approximation and for configurations with negligible Poynting vector transverse walk-off, can be written as  \cite{Mandel,Klyshko,Monken1}
\begin{equation}
|\Psi\rangle 
= \int dq_1\int dq_2\,\Psi(q_1,q_2) |1q_1\rangle|1q_2\rangle,
\end{equation}
where the amplitude $\Psi(q_1,q_2)$ is given by
\begin{equation}
\Psi(q_1,q_2) \propto \Delta(q_1+q_2)\Theta(q_1-q_2),
\end{equation}
with $\Delta(q_1+q_2)$ being the angular spectrum of the pump beam and $\Theta(q_1-q_2)$ representing the phase-matching conditions of the non-linear process. If both functions are represented by Gaussian functions of the form \hbox{$\Delta(q_1+q_2)=\exp\left[-\frac{(q_1+q_2)^2}{2\sigma_+^2}\right]$} and \hbox{$\Theta(q_1-q_2)=\exp\left[-\frac{(q_1-q_2)^2}{2\delta_-^2}\right]$} \cite{Eberly,Torres,Kulik2,Ether,Paulao1,Paulao2,LeoCAM}, we have in momentum representation that
\begin{equation}\label{gaussian-q}
\Psi(q_1,q_2)\propto e^{-(q_1+q_2)^2/2\sigma_+^2}e^{-(q_1-q_2)^2/2\delta_-^2},
\end{equation}
and in position representation that
\begin{equation}\label{gaussian-x}
\Psi(x_1,x_2)\propto e^{-\sigma_+^2(x_1+x_2)^2/8}e^{-\delta_-^2(x_1-x_2)^2/8}.
\end{equation}
The parameter $\delta_-$ can be adjusted in order to approximate the phase-matching function by distinct Gaussian functions. For a pump laser beam with a Gaussian transverse profile, $\sigma_+^2=2/c^2$, where $c$ is the radius of this pump at the plane of the non-linear crystal.

Based on Eqs. (\ref{gaussian-q}) and (\ref{gaussian-x}), we can obtain the probability density functions for the conditional position and momentum distributions of the down-converted photons. The variances of these curves are, in general, independent of the value considered for $x_2$ and $q_2$ \cite{BoydEPR} and here we use, for simplicity, $x_2\!=\!0$ and $q_2\!=\!0$. In this case $P(x_1|x_2\!=\!0)$ and $P(q_1|q_2\!=\!0)$ are given by
\begin{equation}\label{P-x}
P(x_1|x_2\!=\!0)=\frac{\sqrt{\sigma_+^2+\delta_-^2}}{2\sqrt{\pi}}\exp\left[-\frac{\sigma_+^2+\delta_-^2}{4}x_1^2\right],
\end{equation} and
\begin{equation}\label{P-q}
P(q_1|q_2\!=\!0)=\frac{\sqrt{\sigma_+^2+\delta_-^2}}{\sqrt{\pi}\sigma_+^2\delta_-^2}\exp\left[-\frac{\sigma_+^2+\delta_-^2}{\sigma_+^2 \delta_-^2} q_1^2\right].
\end{equation}
Here $\sigma_+^2$ and $\delta_-^2$ are the widths of the respective Gaussian functions \cite{Eberly}. The variances of the conditional distributions can be calculated directly from Eqs. (\ref{P-x}) and (\ref{P-q}), which results in
\begin{equation}\label{D-x}
(\Delta x_1|_{x_2})^2 = \frac{2}{\sigma_+^2+\delta_-^2},
\end{equation} and
\begin{equation}\label{D-q}
(\Delta q_1|_{q_2})^2 = \frac{\sigma_+^2\delta_-^2}{2(\sigma_+^2+\delta_-^2)}.
\end{equation}
Then, the product of them is given by \vspace{-0.2cm}
\begin{equation}\label{prodDxDq}
(\Delta x_1|_{x_2})^2 (\Delta q_1|_{q_2})^2 = \frac{\sigma_+^2\delta_-^2}{(\sigma_+^2+\delta_-^2)^2}.
\end{equation}

Let us now consider the Schmidt number $K$ for Gaussian states. It was showed in Ref. \cite{Eberly} that it is given by the expression
\begin{equation}\label{Schmidt}
K=\frac{1}{4}\left(\frac{\sigma_+}{\delta_-} + \frac{\delta_-}{\sigma_+}\right)^2=\frac{1}{4}\frac{(\sigma_+^2+\delta_-^2)^2}{\sigma_+^2\delta_-^2}.
\end{equation}
Comparing Eqs. (\ref{prodDxDq}) and (\ref{Schmidt}), it follows immediately that the product of the conditional variances is a function of the Schmidt number, such that
\begin{equation}
(\Delta x_1|_{x_2})^2(\Delta q_1|_{q_2})^2 = \frac{1}{4K}.
\end{equation}

This expression is valid for any Gaussian approximation taken for the phase-matching function. It is possible to see that the maximal value of $(\Delta x_1|_{x_2})^2 (\Delta q_1|_{q_2})^2$ is equal to $\frac{1}{4}$ and that it happens when the Schmidt number $K=1$ (i.e., for product Gaussian states).

\section{Mancini \emph{et al.} Criterion for the spatial entanglement of SPDC\label{SuplementoB}}
Another entanglement detection criterion, based on second-order moments, is the one introduced by Mancini \emph{et al.} \cite{Mancini}. For the case of spatial entanglement it reads \cite{Eisert}
\begin{equation}\label{mancini}
[\Delta(\tilde{x}_1-\tilde{x}_2)]^2[\Delta(\tilde{q}_1+\tilde{q}_2)]^2\geq 1.
\end{equation}
When this inequality is violated, the state is not separable. If this condition is satisfied, no information can be drawn. For the far and near-field joint spatial distributions [Eq.~(3) and Eq.~(4) of main paper, respectively], the variances in (\ref{mancini}) are given by
\begin{equation}
[\Delta(\tilde{q}_1+\tilde{q}_2)]^2 \propto \int d\tilde{q}_+\, \tilde{q}_+^2 e^{-\frac{1}{2}\tilde{q}_+^2}=1,
\end{equation}
\begin{equation}
[\Delta(\tilde{x}_1-\tilde{x}_2)]^2 \propto \int d\tilde{x}_-\, \tilde{x}_-^2 \sint^2\frac{1}{4P}\tilde{x}_-^2=4\frac{A_2}{A_1}P^2,
\end{equation}
where $\tilde{q}_+=\tilde{q}_1+\tilde{q}_2$ and $\tilde{x}_-=\tilde{x}_1-\tilde{x}_2$. The constants
\begin{equation}\label{A1}
A_1=\int_{-\infty}^\infty d\xi\,\sint^2\xi^2\approx 1.4008,
\end{equation}
and
\begin{equation}\label{A2}
A_2=\int_{-\infty}^\infty d\xi\,\xi^2\sint^2\xi^2\approx 0.5897.
\end{equation}

Figure \ref{FigMancini} shows the dependence of the Mancini-Criterion with the parameter $P$, while considering the state of SPDC and when distinct Gaussian approximations for this state are considered. Since there is no upper limit for this criterion with Gaussian two-mode states, it is not possible to use the Mancini-Criterion for the detection of the non-Gaussianity of the spatial state of SPDC.
\begin{figure}[h]
\centering
\includegraphics[angle=-90,width=0.6\textwidth]{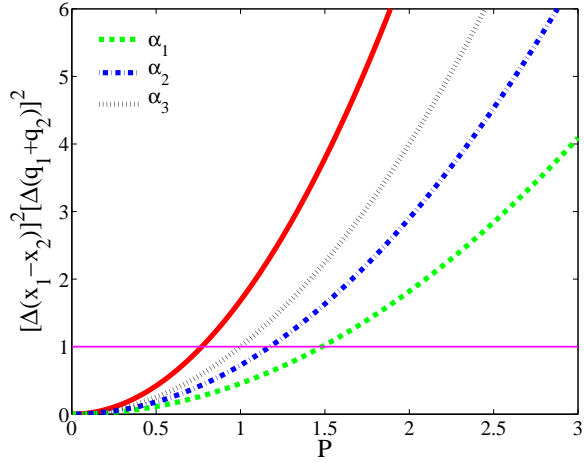}
\caption{Comparison of the Mancini-criterion while considering the state of SPDC (red solid line) and when some Gaussian approximations are adopted. Here we use $\alpha_1=0.45$ (green dashed line), $\alpha_2=0.72$ (blue dot-dashed line), and $\alpha_3=1$ (black dotted line) to describe the spatial Gaussian approximations [See Eq.~(5) and Eq.~(6) of main paper].} \label{FigMancini}
\end{figure}

\section{Negentropy of far- and near-field joint distributions of SPDC\label{SuplementoC}}
\subsection{Negentropy of a probability density function}
Let $p(\xi_1,\xi_2)$ be a probability density function. The Negentropy ($N$) of $p(\xi_1,\xi_2)$ is defined as \cite{Negentropy}
\begin{equation}\label{negent}
N\equiv H[p^{\tilde{G}}(\xi_1,\xi_2)]-H[p(\xi_1,\xi_2)],
\end{equation}
where $p^{\tilde{G}}(\xi_1,\xi_2)$ is a Gaussian distribution with the same expected value vector $\boldsymbol{\mu}=\{\mu_{\xi_1},\mu_{\xi_2}\}$ and same covariance matrix $\Lambda$ of $p(\xi_1,\xi_2)$. The function $H[p(\xi_1,\xi_2)]$, called \emph{differential entropy}, is defined as \cite{Negentropy2}
\begin{equation}\label{diffent}
H[p(\xi_1,\xi_2)]\equiv-\!\int\! d\xi_1\!\int\!d\xi_2\,p(\xi_1,\xi_2)\log_2(p(\xi_1,\xi_2)).
\end{equation}

\subsection{Negentropy of far-field joint distribution}
The joint probability density function in the far-field plane is given by
\begin{equation}
p^S_{FF}(\tilde{q}_1,\tilde{q}_2)=\frac{3P}{4\sqrt{2}\pi}e^{-\frac{(\tilde{q}_1+\tilde{q}_2)^2}{2}}\sinc^2\frac{P^2}{4}(\tilde{q}_1-\tilde{q}_2)^2.\label{psff}\end{equation}

The covariance matrix of $p^S_{FF}(\tilde{q}_1,\tilde{q}_2)$ is given by
\begin{equation}
\Lambda^S_{FF}\!=\!
\left(\begin{array}{cc}
(\Delta \tilde{q}_1)^2_S & \langle \tilde{q}_1\tilde{q}_2\rangle_S -\langle \tilde{q}_1\rangle_S\langle \tilde{q}_2\rangle_S\\
\langle \tilde{q}_2\tilde{q}_1\rangle_S -\langle \tilde{q}_2\rangle_S\langle \tilde{q}_1\rangle_S & (\Delta \tilde{q}_2)^2_S
\end{array}\right).
\end{equation}
The elements in the diagonal are the variances of the marginal distributions $p^S_{FF}(\tilde{q}_1)$ and $p^S_{FF}(\tilde{q}_2)$. In order to calculate the terms of $\Lambda^S_{FF}$, we first note that the expected values $\boldsymbol{\mu}_{FF}=\{\langle \tilde{q}_1\rangle,\langle \tilde{q}_2\rangle\}$ are null. So, we have that
\begin{eqnarray}
(\Delta \tilde{q}_1)^2_S=(\Delta \tilde{q}_2)^2_S&=&\int\!\!\!\int d\tilde{q}_1 d\tilde{q}_2\,\tilde{q}_1^2\,p^S_{FF}(\tilde{q}_1,\tilde{q}_2)\nonumber\\
&=&\frac{1}{4}\left(1+\frac{3}{P^2}\right)\label{varff},
\end{eqnarray}
and
\begin{eqnarray}
\langle \tilde{q}_1\tilde{q}_2\rangle_S=\langle \tilde{q}_2\tilde{q}_1\rangle_S&=&\int\!\!\!\int d\tilde{q}_1 d\tilde{q}_2\,\tilde{q}_1\tilde{q}_2\,p^S_{FF}(\tilde{q}_1,\tilde{q}_2)\nonumber\\
&=&\frac{1}{4}\left(1-\frac{3}{P^2}\right)\label{covarff}.
\end{eqnarray}

Let us consider the following Gaussian approximation for describing the joint probability density distribution at the far-field plane:
\begin{equation}
p^{G}_{FF}(\tilde{q}_1,\tilde{q}_2)=\frac{1}{\pi\delta_-}e^{-\frac{(\tilde{q}_1+\tilde{q}_2)^2}{2}}e^{-\frac{(\tilde{q}_1-\tilde{q}_2)^2}{2\delta_-^2}}.\label{pgff}
\end{equation}
The elements of the covariance matrix $\Lambda^{G}_{FF}$ of $p^{G}_{FF}(\tilde{q}_1,\tilde{q}_2)$ are given by

\begin{eqnarray}
(\Delta \tilde{q}_1)^2_{G}=(\Delta \tilde{q}_2)^2_{G}&=&\int\!\!\!\int d\tilde{q}_1 d\tilde{q}_2\,\tilde{q}_1^2\,p^{G}_{FF}(\tilde{q}_1,\tilde{q}_2)\nonumber\\
&=&\frac{1}{4}\left(1+\delta_-^2\right)\label{varffgaus},
\end{eqnarray}
and
\begin{eqnarray}
\langle \tilde{q}_1\tilde{q}_2\rangle_{G}=\langle \tilde{q}_2\tilde{q}_1\rangle_{G}&=&\int\!\!\!\int d\tilde{q}_1 d\tilde{q}_2\,\tilde{q}_1\tilde{q}_2\,p^{G}_{FF}(\tilde{q}_1,\tilde{q}_2)\nonumber\\
&=&\frac{1}{4}\left(1-\delta_-^2\right)\label{covarffgaus}.
\end{eqnarray}

From Eq. (\ref{varff}), (\ref{covarff}), (\ref{varffgaus}) and (\ref{covarffgaus}) we can observe that the condition
\begin{equation}
\delta_-^2=\frac{3}{P^2},
\end{equation} implies that the joint distributions (\ref{psff}) and (\ref{pgff}) have the same covariance matrix. Hereafter, the Gaussian distribution with the same covariance matrix of (\ref{psff}) is denoted by $p^{\tilde G}_{FF}(\tilde{q_1},\tilde{q_2})$.

It is possible to show that the differential entropy of the Gaussian distribution $p^{\tilde G}_{FF}(\tilde{q}_1,\tilde{q}_2)$ is
\begin{equation}
H[p^{\tilde G}_{FF}(\tilde{q}_1,\tilde{q}_2)]=\log_2\frac{\pi e\sqrt{3}}{P}.
\end{equation}

The differential entropy of $p^S_{FF}(\tilde{q}_1,\tilde{q}_2)$ is given by
\begin{eqnarray}
H[p^S_{FF}(\tilde{q}_1,\tilde{q}_2)]&=&-\int\!\!\!\int d\tilde{q}_1 d\tilde{q}_2\,p^S_{FF}(\tilde{q}_1,\tilde{q}_2)\log_2(p^S_{FF}(\tilde{q}_1,\tilde{q}_2))\nonumber\\
&=&\log_2\frac{4\sqrt{2}\pi}{3P}-\frac{3P}{4\sqrt{2}\pi}\int\!\!\!\int d\tilde{q}_1 d\tilde{q}_2\,\left\lbrace e^{-\frac{(\tilde{q}_1+\tilde{q}_2)^2}{2}}\right.\nonumber\\
& &\times\left.\left[\log_2e^{-\frac{(\tilde{q}_1+\tilde{q}_2)^2}{2}}+\log_2\sinc^2\frac{P^2}{4}(\tilde{q}_1-\tilde{q}_2)^2\right]\right\rbrace.
\end{eqnarray}

Making $u=\frac{\tilde{q}_1+\tilde{q}_2}{\sqrt{2}}$ and $v=\frac{P}{2}(\tilde{q}_1-\tilde{q}_2)$ we obtain that
\begin{eqnarray}
H[p^S_{FF}(\tilde{q}_1,\tilde{q}_2)]&=&\log_2\frac{4\sqrt{2}\pi}{3P}\frac{1}{2\ln 2}\nonumber\\
& &\times\left[1-\frac{3}{\sqrt{\pi}}\int_0^\infty\!\! dv\,\sinc^2 v^2\ln\left(\sinc^2 v^2\right)\right].\label{intnum}
\end{eqnarray}

The last integral in (\ref{intnum}) must be calculated numerically and it gives that
\begin{equation*}
\int_0^\infty\!\! dv\,\sinc^2 v^2\ln\left(\sinc^2 v^2\right)\approx-0.364.
\end{equation*}

Therefore, the differential entropy of $p^S_{FF}(\tilde{q}_1,\tilde{q}_2)$ is
\begin{equation}
H[p^S_{FF}(\tilde{q}_1,\tilde{q}_2)]\approx\log_2\frac{4\sqrt{2}\pi}{3P}+1.17.
\end{equation}

According to Eq. (\ref{negent}), we have that
\begin{eqnarray}
N[p^S(\tilde{q}_1,\tilde{q}_2)]&=&H[p^{\tilde{G}}_{FF}(\tilde{q}_1,\tilde{q}_2)]-H[p^S_{FF}(\tilde{q}_1,\tilde{q}_2)] \nonumber \\
& &\approx\log_2\frac{\pi e\sqrt{3}}{P}-\log_2\frac{4\sqrt{2}\pi}{3P}-1.17\nonumber\\
& &\approx\log_2 e\sqrt{\frac{27}{32}}-1.17\nonumber\\
& &\approx 0.15.
\end{eqnarray}

\subsection{Negentropy of near-field joint distribution}
The joint probability density function in the near-field plane is given by
\begin{equation}
p^S_{NF}(\tilde{x}_1,\tilde{x}_2)=\frac{1}{C}e^{-\frac{(\tilde{x}_1+\tilde{x}_2)^2}{2\sigma}}\sint^2\frac{(\tilde{x}_1-\tilde{x}_2)^2}{4P^2},\label{psnf}
\end{equation}
where $C$ is a normalization constant given by
\begin{equation}
C=\sqrt{2\pi}A_1\sqrt{\sigma}P.
\end{equation}

The covariance matrix for the near-field joint distribution is
\begin{equation}
\Lambda^S_{NF}=
\left(\begin{array}{cc}
(\Delta \tilde{x}_1)^2_S& \langle \tilde{x}_1\tilde{x}_2\rangle_S\\
\langle \tilde{x}_2\tilde{x}_1\rangle_S& (\Delta \tilde{x}_2)^2_S
\end{array}\right),
\end{equation}
where
\begin{eqnarray}
(\Delta \tilde{x}_1)^2_S=(\Delta \tilde{x}_2)^2_S&=&\int\!\!\!\int d\tilde{x}_1 d\tilde{x}_2\,\tilde{x}_1^2\,p^S_{NF}(\tilde{x}_1,\tilde{x}_2)\nonumber\\
&=&\frac{1}{4}\left(\sigma+\frac{4A_2 P^2}{A_1}\right)\label{varnf},
\end{eqnarray}
and
\begin{eqnarray}
\langle \tilde{x}_1\tilde{x}_2\rangle_S=\langle \tilde{x}_2\tilde{x}_1\rangle_S&=&\int\!\!\!\int d\tilde{x}_1 d\tilde{x}_2\,\tilde{x}_1\tilde{x}_2\,p^S_{NF}(\tilde{x}_1,\tilde{x}_2)\nonumber\\
&=&\frac{1}{4}\left(\sigma-\frac{4A_2 P^2}{A_1}\right)\label{covarnf}.
\end{eqnarray}

Let us now consider the following Gaussian approximation for the joint probability distribution at the near-field plane:
\begin{equation}
p^G_{NF}(\tilde{x}_1,\tilde{x}_2)=\frac{1}{\pi\sqrt{\sigma}\tau}e^{-\frac{(\tilde{x}_1+\tilde{x}_2)^2}{2\sigma}}e^{-\frac{(\tilde{x}_1-\tilde{x}_2)^2}{2\tau^2}}.\label{pgnf}
\end{equation}

The covariance matrix $\Lambda^G_{NF}$ of $p^G_{NF}(\tilde{x}_1,\tilde{x}_2)$ is determined by
\begin{equation}
\Lambda^G_{NF}=
\left(\begin{array}{cc}
(\Delta \tilde{x}_1)^2_G\!\!&\! \langle \tilde{x}_1\tilde{x}_2\rangle_G \\
\langle \tilde{x}_2\tilde{x}_1\rangle_G  \!&\! (\Delta \tilde{x}_2)^2_G
\end{array}\right),
\end{equation}
where
\begin{eqnarray}
(\Delta \tilde{x}_1)^2_{G}=(\Delta \tilde{x}_2)^2_{G}&=&\int\!\!\!\int d\tilde{x}_1 d\tilde{x}_2\,\tilde{x}_1^2\,p^{G}_{NF}(\tilde{x}_1,\tilde{x}_2)\nonumber\\
&=&\frac{1}{4}\left(\sigma+\tau^2\right)\label{varnfgaus},
\end{eqnarray}
and
\begin{eqnarray}
\langle \tilde{x}_1\tilde{x}_2\rangle_{G}=\langle \tilde{x}_2\tilde{x}_1\rangle_{G}&=&\int\!\!\!\int d\tilde{x}_1 d\tilde{x}_2\,\tilde{x}_1\tilde{x}_2\,p^{G}_{NF}(\tilde{x}_1,\tilde{x}_2)\nonumber\\
&=&\frac{1}{4}\left(\sigma-\tau^2\right)\label{covarnfgaus}.
\end{eqnarray}

From Eq. (\ref{varnf}), (\ref{covarnf}), (\ref{varnfgaus}) and (\ref{covarnfgaus}) we can observe that the condition
\begin{equation}
\tau^2=\frac{4A_2P^2}{A_1},
\end{equation}
must be satisfied for having $\Lambda_{NF}^G=\Lambda_{NF}^S$. The Gaussian distribution which satisfies this condition is denoted by $p^{\tilde{G}}_{NF}(\tilde{x}_1,\tilde{x}_2)$.

The differential entropy of this Gaussian distribution is given by
\begin{equation}
H[p^{\tilde{G}}_{NF}(\tilde{x}_1,\tilde{x}_2)]=\log_2\left(2\pi e\sqrt{\frac{\sigma A_2}{A_1}}P\right).
\end{equation}

The differential entropy for $p^S_{NF}(\tilde{x}_1,\tilde{x}_2)$ reads
\begin{eqnarray}
H[p^S_{NF}(\tilde{x}_1,\tilde{x}_2)]&=&-\int\!\!\!\int d\tilde{x}_1 d\tilde{x}_2\,p^S_{NF}(\tilde{x}_1,\tilde{x}_2)\log_2(p^S_{NF}(\tilde{x}_1,\tilde{x}_2))\nonumber\\
&=&\log_2(\sqrt{2\pi\sigma}A_1P)-\frac{1}{\sqrt{2\pi\sigma}A_1P}\int\!\!\!\int d\tilde{x}_1 d\tilde{x}_2\,\left\lbrace e^{-\frac{(\tilde{x}_1+\tilde{x}_2)^2}{2\sigma}}\right.\nonumber\\
& &\times\left.\sint^2\frac{(\tilde{x}_1-\tilde{x}_2)^2}{4P^2}\left[\log_2e^{-\frac{(\tilde{x}_1+\tilde{x}_2)^2}{2\sigma}}\right.\right.\nonumber\\
& &\left.\left.+\log_2\sint^2\frac{(\tilde{x}_1-\tilde{x}_2)^2}{4P^2}\right]\right\rbrace.
\end{eqnarray}

If $r=\frac{\tilde{x}_1+\tilde{x}_2}{\sqrt{2\sigma}}$ and $s=\frac{\tilde{x}_1-\tilde{x}_2}{2P}$, we find that
\begin{eqnarray}
H[p^S_{NF}(\tilde{x}_1,\tilde{x}_2)]&=&\log_2(\sqrt{2\pi\sigma}A_1P)\nonumber\\
& & +\frac{1}{2\ln 2}\left[1-\frac{2}{A_1}\int ds\,\sint^2 s^2\ln\sint^2 s^2\right].
\end{eqnarray}

The integral $$I=\int ds\sint^2 s^2\ln\sint^2 s^2,$$ must be calculated numerically. It gives that $I\approx-0.692$. Then, the differential entropy of the near-field joint distribution is
\begin{equation}
H[p^S_{NF}(\tilde{x}_1,\tilde{x}_2)]\approx\log_2(\sqrt{2\pi\sigma}A_1P)+1.434.
\end{equation}

Therefore, the negentropy of the near-field joint distribution is
\begin{eqnarray}
N[p^S_{NF}(\tilde{x}_1,\tilde{x}_2)]&=&H[p^{\tilde{G}}_{NF}(\tilde{x}_1,\tilde{x}_2)]-H[p^S_{NF}(\tilde{x}_1,\tilde{x}_2)]\nonumber\\
&\approx& \log_2\left(2\pi e\sqrt{\frac{\sigma A_2}{A_1}}P\right)-\log_2(\sqrt{2\pi\sigma}A_1P)-1.434\nonumber\\
&\approx& \log_2\left(\sqrt{2\pi} e\sqrt{\frac{A_2}{A_1^3}}\right)-1.434\nonumber\\
&\approx& 0.22.
\end{eqnarray}

Thus, the total non-Gaussianity of SPDC is
\begin{equation}
nG^T\approx 0.15+0.22=0.37.
\end{equation}

\section{Further properties of $nG^T$\label{SuplementoD}}
\newtheorem{theorem}{Property}
\begin{theorem}
$nG^T=0$ iff the spatial two-photon state of SPDC is a Gaussian state.
\end{theorem}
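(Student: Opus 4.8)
The plan is to reduce the claim to two standard properties of the negentropy that are recalled in Appendix~\ref{SuplementoC}: for every probability density $p$ one has $N[p]\ge 0$, with equality if and only if $p$ is Gaussian. This is most transparent by noting that $N[p]=H[p^{\tilde G}]-H[p]$ equals the Kullback--Leibler divergence $D(p\,\|\,p^{\tilde G})$ between $p$ and the Gaussian $p^{\tilde G}$ with the same mean and covariance; by Gibbs' inequality this is non-negative and vanishes exactly when $p=p^{\tilde G}$ almost everywhere, i.e. when $p$ is Gaussian. Since $nG^T=N[p_{FF}^S(\tilde q_1,\tilde q_2)]+N[p_{NF}^S(\tilde x_1,\tilde x_2)]$ is a sum of two non-negative terms, $nG^T=0$ if and only if both $p_{FF}^S$ and $p_{NF}^S$ are Gaussian distributions. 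It then remains to show that this happens exactly when $\ket{\Psi}$ is a two-mode Gaussian state.

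For the ``if'' direction the argument is immediate: if $\ket{\Psi}$ is a (pure) two-mode Gaussian state, its momentum wave function $\Psi(\tilde q_1,\tilde q_2)$ is the exponential of a quadratic form, so $p_{FF}^S=|\Psi|^2$ is Gaussian; the position wave function is the Fourier transform of a Gaussian and is therefore again Gaussian, so $p_{NF}^S=|\widehat\Psi|^2$ is Gaussian as well. Both negentropies vanish and $nG^T=0$.

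For the ``only if'' direction I would exploit the product structure $\Psi(\tilde q_1,\tilde q_2)\propto\Delta(\tilde q_1+\tilde q_2)\,\Theta(\tilde q_1-\tilde q_2)$ recalled in Appendix~\ref{SuplementoA}, with $\Delta$ the pump angular spectrum (Gaussian modulus) and $\Theta$ the \emph{real} phase-matching amplitude ($\Theta=\sinc$, and its Fourier partner $\widehat\Theta=\sint$ is also real). Then $p_{FF}^S(\tilde q_1,\tilde q_2)\propto|\Delta(\tilde q_+)|^2\,\Theta^2(\tilde q_-)$ already factorizes in the variables $\tilde q_\pm=\tilde q_1\pm\tilde q_2$; if this is Gaussian in $(\tilde q_1,\tilde q_2)$ then, the $\tilde q_+$-factor being Gaussian, $\Theta^2(\tilde q_-)$ must be Gaussian in $\tilde q_-$. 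A Gaussian is nowhere zero, hence $\Theta$ is nowhere zero and of constant sign, which forces $\Theta$ itself to be a Gaussian. Consequently $\Psi$ is a product of Gaussians in $\tilde q_+$ and $\tilde q_-$, i.e. a two-mode Gaussian state. (Symmetrically, Gaussianity of $p_{NF}^S$ forces $\widehat\Theta$, hence $\Theta$, to be Gaussian, so either of the two conditions already suffices.)

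The step I expect to be the main obstacle is this ``only if'' direction if one insists on full generality: for an arbitrary pure two-mode state, Gaussianity of $|\Psi|^2$ and of $|\widehat\Psi|^2$ does not by itself force $\Psi$ to be Gaussian (this is a Pauli-type reconstruction ambiguity), and one must bring in the additional structural input that the amplitude multiplying the Gaussian pump factor is real-valued --- which is what makes ``squared modulus Gaussian $\Rightarrow$ amplitude Gaussian'' elementary. I would therefore phrase the statement for states of the SPDC form $\Delta\cdot\Theta$ considered throughout, where the argument above is self-contained; the general Fourier-rigidity fact (that $|f|$ and $|\widehat f|$ both Gaussian implies $f$ is a chirped Gaussian) can be invoked if the claim is wanted for arbitrary pure two-mode states.
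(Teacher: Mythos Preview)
Your proposal is correct and in fact more careful than the paper's own argument. Both proofs start the same way: $nG^T$ is a sum of two non-negative negentropies, so $nG^T=0$ forces $N[p_{FF}^S]=N[p_{NF}^S]=0$. From there the routes diverge.

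For the implication ``$N[p]=0\Rightarrow p$ Gaussian'', the paper argues that equality of the differential entropies $H[p^{\tilde G}]=H[p^S]$ entails pointwise equality of the integrands $p^{\tilde G}\log_2 p^{\tilde G}=p^S\log_2 p^S$ ``since the differential entropy $H$ is a continuous and monotonic function'', and thence $p^{\tilde G}=p^S$. Neither step is rigorously justified as written (equal integrals do not force equal integrands, and $x\mapsto x\log x$ is not injective). Your identification $N[p]=D(p\,\|\,p^{\tilde G})$ together with Gibbs' inequality is the clean way to obtain $p=p^{\tilde G}$, and it is what the paper's argument is implicitly relying on.

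For the passage from ``$p_{FF}^S$ and $p_{NF}^S$ Gaussian'' to ``$\ket{\Psi}$ Gaussian'', the paper simply asserts the conclusion. You correctly flag the Pauli-type obstruction that, for a generic pure state, Gaussianity of $|\Psi|^2$ and $|\widehat\Psi|^2$ does not pin down $\Psi$ up to a Gaussian, and you close the gap by invoking the SPDC product structure $\Psi\propto\Delta(\tilde q_+)\,\Theta(\tilde q_-)$ with real $\Theta$: Gaussianity of $p_{FF}^S$ forces $\Theta^2$ Gaussian, hence $\Theta$ nonvanishing and Gaussian, hence $\Psi$ Gaussian. This is a genuine addition; it buys you a self-contained argument for the class of states the paper actually treats, whereas the paper's version tacitly assumes the step.
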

\begin{proof}
Since
\begin{equation}
nG^T=N_{NF}+N_{FF}=0,
\end{equation}
and that negentropy is always nonnegative, we have that $N[p^S_{NF}(\tilde{x}_1,\tilde{x}_2)]=N[p^S_{FF}(\tilde{q}_1,\tilde{q}_2)]=0$. Thus, for the joint probabilities we have that $H[p^{\tilde{G}}_{FF}(\tilde{q}_1,\tilde{q}_2)]=H[p^S_{FF}(\tilde{q}_1,\tilde{q}_2)]$ and $H[p^{\tilde{G}}_{NF}(\tilde{x}_1,\tilde{x}_2)]=H[p^S_{NF}(\tilde{x}_1,\tilde{x}_2)]$, such that
\begin{eqnarray}
\int d\tilde{q}_1 d\tilde{q}_2\,p^{\tilde{G}}_{FF}(\tilde{q}_1,\tilde{q}_2) \log_2 p^{\tilde{G}}_{FF}(\tilde{q}_1,\tilde{q}_2)=  \nonumber \\ \int d\tilde{q}_1 d\tilde{q}_2\, p^S_{FF}(\tilde{q}_1,\tilde{q}_2)\log_2 p^S_{FF}(\tilde{q}_1,\tilde{q}_2),
\end{eqnarray}
and
\begin{eqnarray}
\int d\tilde{x}_1 d\tilde{x}_2\, p^{\tilde{G}}_{NF}(\tilde{x}_1,\tilde{x}_2) \log_2 p^{\tilde{G}}_{NF}(\tilde{x}_1,\tilde{x}_2)= \nonumber \\ \int d\tilde{x}_1 d\tilde{x}_2\, p^S_{NF}(\tilde{x}_1,\tilde{x}_2) \log_2 p^S_{NF}(\tilde{x}_1,\tilde{x}_2).
\end{eqnarray}
Since the differential entropy $H$ is a continuous and monotonic function, it holds that
\begin{eqnarray}
\!p^{\tilde{G}}_{FF}(\tilde{q}_1,\tilde{q}_2)\!\log_2 p^{\tilde{G}}_{FF}(\tilde{q}_1,\tilde{q}_2)\!= \nonumber \\ \!p^S_{FF}(\tilde{q}_1,\tilde{q}_2)\! \log_2 p^S_{FF}(\tilde{q}_1,\tilde{q}_2),
\end{eqnarray}
and
\begin{eqnarray}
p^{\tilde{G}}_{NF}(\tilde{x}_1,\tilde{x}_2)\!\log_2 p^{\tilde{G}}_{NF}(\tilde{x}_1,\tilde{x}_2)\!= \nonumber \\ \!p^S_{NF}(\tilde{x}_1,\tilde{x}_2)\!\log_2 p^S_{NF}(\tilde{x}_1,\tilde{x}_2).
\end{eqnarray}
Then, necessarily we have that
\begin{equation}
p^{\tilde{G}}_{FF}(\tilde{q}_1,\tilde{q}_2)=p^S_{FF}(\tilde{q}_1,\tilde{q}_2),
\end{equation} and
\begin{equation}
p^{\tilde{G}}_{NF}(\tilde{x}_1,\tilde{x}_2)=p^S_{NF}(\tilde{x}_1,\tilde{x}_2).
\end{equation}
Thus, we conclude that the two-photon state is a Gaussian state.

In the opposite direction we consider that the two-photon state is a Gaussian state. Then, the joint probability density functions of transverse position and momentum are Gaussian functions. So, from the definition of negentropy we must have that
\begin{equation}
N[p^S_{NF}(\tilde{x}_1,\tilde{x}_2)]=N[p^S_{FF}(\tilde{q}_1,\tilde{q}_2)]=0.
\end{equation}

So, the total non-Gaussianity of the spatial two-photon state becomes
\begin{equation}
nG^T=0.
\end{equation}
\end{proof}

\begin{theorem}
If the spatial two-photon state is a product state then $nG^T$ is additive, i.e., $nG^T=2\times nG^M$.
\end{theorem}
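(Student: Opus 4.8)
The plan is to reduce the claim to a purely information-theoretic fact about product densities, after noting one structural point: if $|\Psi\rangle$ is a pure product state $|\psi_1\rangle\otimes|\psi_2\rangle$, the factorization survives in \emph{every} single-particle basis, so both joint distributions split, $p^S_{FF}(\tilde q_1,\tilde q_2)=p^S_{FF}(\tilde q_1)\,p^S_{FF}(\tilde q_2)$ and $p^S_{NF}(\tilde x_1,\tilde x_2)=p^S_{NF}(\tilde x_1)\,p^S_{NF}(\tilde x_2)$. This is the only place where the product hypothesis is used, and it deserves an explicit sentence because that hypothesis is most naturally read off the momentum amplitude of Eq.~(\ref{EstMon}); once both factorizations are granted, everything else is mechanical.

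First I would record that the differential entropy is additive on independent marginals, $H[p(\xi_1)p(\xi_2)]=H[p(\xi_1)]+H[p(\xi_2)]$, which follows by inserting the product into the definition~(\ref{diffent}) and using $\int p(\xi_i)\,d\xi_i=1$. Next I would observe that a product distribution has a diagonal covariance matrix, since the off-diagonal entries $\langle\xi_1\xi_2\rangle-\langle\xi_1\rangle\langle\xi_2\rangle$ vanish for independent $\xi_1,\xi_2$; therefore the companion Gaussian $p^{\tilde G}$ attached to it — the Gaussian with the same mean vector and covariance matrix — is itself the product of the two one-dimensional Gaussians $p^{\tilde G}(\xi_1)$, $p^{\tilde G}(\xi_2)$ matching the respective marginals, so $H[p^{\tilde G}(\xi_1,\xi_2)]=H[p^{\tilde G}(\xi_1)]+H[p^{\tilde G}(\xi_2)]$ as well. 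Subtracting the two identities and using the definition~(\ref{negent}) gives $N[p(\xi_1,\xi_2)]=N[p(\xi_1)]+N[p(\xi_2)]$ for any product density.

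Then I would apply this to the far- and near-field joint distributions and invoke the symmetry of the two-photon wave function under $1\leftrightarrow2$ already used in the main text, which forces $N[p^S_{FF}(\tilde q_1)]=N[p^S_{FF}(\tilde q_2)]$ and likewise for the position marginals. Hence $N[p^S_{FF}(\tilde q_1,\tilde q_2)]=2\,N[p^S_{FF}(q_i)]$ and $N[p^S_{NF}(\tilde x_1,\tilde x_2)]=2\,N[p^S_{NF}(x_i)]$; summing and recalling $nG^M\equiv N[p^S_{FF}(q_i)]+N[p^S_{NF}(x_i)]$ yields $nG^T=2\,nG^M$, as claimed. The argument carries no numerical input, consistent with the fact that $nG^T$ is independent of $P$.

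The only step that needs any care is verifying that the companion Gaussian of a product distribution again factorizes, and this is immediate from the characterization that a bivariate Gaussian is a product precisely when its covariance matrix is diagonal; so I do not expect a real obstacle here, only the bookkeeping of keeping the momentum and position contributions separate throughout.
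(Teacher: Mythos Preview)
Your proposal is correct and follows essentially the same route as the paper: factorize the joint distributions, observe that the covariance matrix is diagonal so the companion Gaussian also factorizes into the correct one-dimensional companions, use additivity of differential entropy on products to obtain additivity of the negentropy, and then invoke the $1\leftrightarrow2$ symmetry to collapse $nG^{M_1}+nG^{M_2}$ to $2\,nG^M$. Your opening remark that a pure product state factorizes in every single-particle basis (so both the momentum and position joint densities split) is a useful clarification that the paper leaves implicit.
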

\begin{proof}
For a set of two random and independent variables, the joint probability density functions are given by the product of the probability density functions associated to each variable, i.e., $p(\xi_1,\xi_2)=p(\xi_1)p(\xi_2)$. Such type of probability density function has the covariance matrix $\Lambda_{ij}$ defined by the elements
\begin{eqnarray}
\Lambda_{11}&=&\esp{\xi_1^2}-\esp{\xi_1}^2\nonumber\\
&=&\int d\xi_1\,\xi_1^2p(\xi_1)-\left(\int d\xi_1\,\xi_1 p(\xi_1)\right)^2,\\
\Lambda_{22}&=&\esp{\xi_2^2}-\esp{\xi_2}^2\nonumber\\
&=&\int d\xi_2\,\xi_2^2p(\xi_2)-\left(\int d\xi_2\,\xi_2 p(\xi_2)\right)^2,\\
\Lambda_{12}&=&\esp{\xi_1\xi_2}-\esp{\xi_1}\esp{\xi_2}\nonumber\\
&=&\int d\xi_1\,\xi_1 p(\xi_1)\int d\xi_2\,\xi_2p(\xi_2)-\int d\xi_1\,\xi_1p(\xi_1)\int d\xi_2\,\xi_2p(\xi_2)\nonumber\\
&=&0,\\
\Lambda_{21}&=&\Lambda_{12}=0,
\end{eqnarray}
and a expected value vector $\boldsymbol{\mu}=\{\esp{\xi_1},\esp{\xi_2}\}$.

The Gaussian distribution $p^{\tilde G}(\xi_1,\xi_2)$ with the same covariance matrix and expected value vector is given by
\begin{eqnarray}
p^{\tilde G}(\xi_1,\xi_2)&=&\frac{1}{2\pi\sqrt{\det\Lambda}}\exp\left[-\frac{1}{2}(\boldsymbol{\xi}-\boldsymbol{\mu})\Lambda^{-1}(\boldsymbol{\xi}-\boldsymbol{\mu})^T\right]\nonumber\\
&=&\frac{e^{-\frac{(\xi_1-\esp{\xi_1})^2}{2\Lambda_{11}}}}{\sqrt{2\pi\Lambda_{11}}}\times\frac{e^{-\frac{(\xi_2-\esp{\xi_2})^2}{2\Lambda_{22}}}}{\sqrt{2\pi\Lambda_{22}}}\nonumber\\
&=&p^{\tilde G}(\xi_1)p^{\tilde G}(\xi_2),
\end{eqnarray}
where $\boldsymbol{\xi}=\{\xi_1,\xi_2\}$.

Note that these marginal Gaussian distributions have the same variance and expected value that $p(\xi_1)$ and $p(\xi_2)$. Since the random variables are independent, it holds that
\begin{equation}
H[p(\xi_1,\xi_2)]=H[p(\xi_1)]+H[p(\xi_2)],
\end{equation}
and consequently, we have that the negentropy of $p(\xi_1,\xi_2)$ can be written as
\begin{eqnarray}
N[p(\xi_1,\xi_2)]&=&H[p^{\tilde G}(\xi_1,\xi_2)]-H[p(\xi_1,\xi_2)]\nonumber\\
&=&H[p^{\tilde G}(\xi_1)]\!-\!H[p(\xi_1)]\!+\!H[p^{\tilde G}(\xi_2)]\!-\!H[p(\xi_2)]\nonumber\\
&=&N[p(\xi_1)]+N[p(\xi_2)].
\end{eqnarray}

If the two-mode spatial state of SPDC is a product state, the near- and far-field probability density functions are written as $p^S_{FF}(\tilde{q}_1,\tilde{q}_2)=p^S_{FF}(\tilde{q}_1)p^S_{FF}(\tilde{q}_2)$ and $p^S_{NF}(\tilde{x}_1,\tilde{x}_2)=p^S_{NF}(\tilde{x}_1)p^S_{NF}(\tilde{x}_2)$. Therefore, the properties mentioned above hold. Then, the total non-Gaussianity of the spatial two-photon state of SPDC will be given by
\begin{eqnarray}
nG^T&=&N[p^S_{FF}(\tilde{q}_1,\tilde{q}_2)]+N[p^S_{NF}(\tilde{x}_1,\tilde{x}_2)]\nonumber\\
&=&N[p^S_{FF}(\tilde{q}_1)]+N[p^S_{FF}(\tilde{q}_2)]+N[p^S_{NF}(\tilde{x}_1)]+N[p^S_{NF}(\tilde{x}_2)]\nonumber\\
&=&nG^{M_1}+nG^{M_2},
\end{eqnarray}
where we have used that $nG^{M_1}\equiv N[p^S_{FF}(\tilde{q_1})]+N[p^S_{NF}(\tilde{x}_1)]$ and $nG^{M_2}\equiv N[p^S_{FF}(\tilde{q_2})]+N[p^S_{NF}(\tilde{x}_2)]$.

Due to the symmetry present in the two-photon wave function [see Eq. (3) and Eq. (4) of the main paper], we have that $nG^{M_1}=nG^{M_2}=nG^M$. Then
\begin{equation}
nG^T=2\times nG^M,
\end{equation}
which proves our statement.
\end{proof}

\begin{theorem}
The non-Gaussianity of the spatial state of SPDC decreases under partial trace, such that $nG^T > nG^M$.
\end{theorem}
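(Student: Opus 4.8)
The plan is to reduce the claim to a marginalization property of negentropy, applied separately at the near- and far-field planes. Tracing out photon~2 leaves photon~1 with the one-mode marginals $p^S_{FF}(\tilde q_1)$ and $p^S_{NF}(\tilde x_1)$, so, writing $nG^M=N[p^S_{FF}(\tilde q_1)]+N[p^S_{NF}(\tilde x_1)]$ (the index choice $i=1$ being immaterial, by the symmetry $nG^{M_1}=nG^{M_2}$), we may split $nG^T-nG^M=\big(N[p^S_{FF}(\tilde q_1,\tilde q_2)]-N[p^S_{FF}(\tilde q_1)]\big)+\big(N[p^S_{NF}(\tilde x_1,\tilde x_2)]-N[p^S_{NF}(\tilde x_1)]\big)$. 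It therefore suffices to prove, for a generic bivariate density $p(\xi_1,\xi_2)$ with covariance matrix $\Lambda$, that $N[p(\xi_1,\xi_2)]\ge N[p(\xi_1)]$, with equality only when the conditional $p(\xi_2\,|\,\xi_1)$ is a Gaussian that does not depend on $\xi_1$. Applying this at both planes then gives $nG^T>nG^M$ once one checks that neither SPDC conditional has that degenerate form.

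For the general inequality I would use the chain rule for differential entropy, $H[p(\xi_1,\xi_2)]=H[p(\xi_1)]+H[p(\xi_2\,|\,\xi_1)]$. The key observation is that the $\xi_1$-marginal of the covariance-matched Gaussian $p^{\tilde G}(\xi_1,\xi_2)$ coincides with the Gaussian reference $p^{\tilde G}(\xi_1)$ appearing in $N[p(\xi_1)]$, because a marginal of a Gaussian is Gaussian and the variances agree by construction. Applying the chain rule to $p$ and to $p^{\tilde G}$ and subtracting then gives
\begin{equation*}
N[p(\xi_1,\xi_2)]-N[p(\xi_1)]=H[p^{\tilde G}(\xi_2\,|\,\xi_1)]-H[p(\xi_2\,|\,\xi_1)].
\end{equation*}
The core step is thus the lemma $H[p(\xi_2\,|\,\xi_1)]\le H[p^{\tilde G}(\xi_2\,|\,\xi_1)]$, which I would prove by passing to the linear residual $\eta\equiv\xi_2-(\Lambda_{12}/\Lambda_{11})\xi_1$: subtracting a function of the conditioning variable leaves the conditional entropy unchanged, so $H[p(\xi_2\,|\,\xi_1)]=H[p(\eta\,|\,\xi_1)]\le H[p(\eta)]\le\tfrac12\log_2\!\big(2\pi e\,\mathrm{Var}\,\eta\big)$, the two inequalities being the standard facts that conditioning does not increase entropy and that the Gaussian maximizes entropy at fixed variance \cite{Negentropy2}. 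Since $\mathrm{Var}\,\eta=\Lambda_{22}-\Lambda_{12}^2/\Lambda_{11}$ is exactly the (constant) conditional variance of the jointly Gaussian $p^{\tilde G}$, the last quantity equals $H[p^{\tilde G}(\xi_2\,|\,\xi_1)]$, which establishes the lemma; equality throughout would force $\eta$ to be Gaussian and independent of $\xi_1$, i.e. $p(\xi_2\,|\,\xi_1)$ to be a $\xi_1$-independent Gaussian.

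To conclude, I would feed in the explicit SPDC conditionals to get strictness: $p^S_{FF}(\tilde q_1\,|\,\tilde q_2)\propto e^{-(\tilde q_1+\tilde q_2)^2/2}\,\sinc^2\!\big(\tfrac{P^2}{4}(\tilde q_1-\tilde q_2)^2\big)$ is non-Gaussian in $\tilde q_1$ for every $P>0$, and $p^S_{NF}(\tilde x_1\,|\,\tilde x_2)$ likewise carries a $\sint^2$ factor, so both brackets in the decomposition above are strictly positive and $nG^T>nG^M$ follows by addition. The main obstacle I anticipate is not the chain of inequalities itself but ensuring it is well defined: one must check that the differential entropies and the entries of $\Lambda$ are finite for the $\sinc^2$ and $\sint^2$ densities (they are, those covariance matrices having been computed in Appendix~\ref{SuplementoC}), and that the marginal of the joint Gaussian reference genuinely coincides with the Gaussian reference of the marginal, so that the cancellation in the chain-rule step is exact.
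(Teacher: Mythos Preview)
Your argument is correct and constitutes a genuinely different proof from the one in the paper. The paper does not prove a general marginalization inequality for negentropy; instead it verifies $nG^T>nG^M$ for the specific SPDC state by direct computation. It evaluates the marginal negentropies in the two asymptotic regimes $P\ll 1$ and $P\gg 1$ (obtaining $nG^M\approx 0.154$ and $nG^M\approx 0.224$ respectively), fills in the intermediate range $0.002<P<3$ numerically via the curves of Fig.~3, and then compares all of these values against the previously computed constant $nG^T\approx 0.37$.

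Your route is more conceptual: you prove once and for all that $N[p(\xi_1,\xi_2)]\ge N[p(\xi_1)]$ for any bivariate density with finite second moments, via the chain rule, the residual $\eta=\xi_2-(\Lambda_{12}/\Lambda_{11})\xi_1$, the ``conditioning reduces entropy'' inequality, and the Gaussian maximum-entropy bound. This covers every $P$ at once without recourse to numerics, and in fact applies to any two-mode pure state, not only the SPDC one. The paper's approach, by contrast, has the practical virtue of producing the actual numerical profile of $nG^M(P)$, which is informative in its own right and feeds into the surrounding discussion. One small wording issue: your equality condition should read that $p(\xi_2\,|\,\xi_1)$ is Gaussian with $\xi_1$-independent \emph{variance} (its mean may still shift linearly with $\xi_1$); your strictness check via the non-Gaussian $\sinc^2$/$\sint^2$ factors in the SPDC conditionals is nonetheless the right one and is unaffected by this.
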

\begin{proof}
To demonstrate this we first study the negentropies of the far- and near-field marginal distributions, while considering two limiting cases: (i) $P\ll1$ and (ii) $P$ very large. The joint distributions of the far- and near-field planes are given by Eq.~(\ref{psff}) and Eq.~(\ref{psnf}), respectively. Thus, the marginal distributions are given by

\begin{equation}
p^S_{FF}(\tilde{q}_2)=\int d\tilde{q}_1\,p^S_{FF}(\tilde{q}_1,\tilde{q}_2),
\end{equation} and

\begin{equation}
p^S_{NF}(\tilde{x}_2)=\int d\tilde{x}_1\,p^S_{NF}(\tilde{x}_1,\tilde{x}_2).
\end{equation}

Doing the following change of variables: $u=\frac{P}{2}(\tilde{q}_1-\tilde{q}_2)$ and $v=\frac{\tilde{x}_1-\tilde{x}_2}{2P}$, we obtain that

\begin{equation}\label{margff}
p^S_{FF}(\tilde{q}_2)\propto\int du\,e^{-\frac{2}{P^2}(u+P\tilde{q}_2)^2}\sinc^2u^2,
\end{equation} and

\begin{equation}\label{margnf}
p^S_{NF}(\tilde{x}_2)\propto\int dv\,e^{-2P^2\left(v+\frac{\tilde{x}_2}{P}\right)^2}\sint^2v^2.
\end{equation}

Considering the limiting case (i) we may write the marginal distributions as

\begin{equation}
p^S_{FF}(\tilde{q}_2)\propto\int du\,\delta(u+P\tilde{q}_2)\sinc^2u^2\propto\sinc^2(P^2\tilde{q}_2),
\end{equation} and

\begin{equation}
p^S_{NF}(\tilde{x}_2)\propto e^{-2\tilde{x}_2^2}.
\end{equation} In such case, it is clear that the negentropy of $p^S_{NF}(\tilde{x}_2)$ is null [$N[p^S_{NF}(\tilde{x}_2)]=0$]. The negentropy of the far-field marginal distribution will be the same of the function $\sinc^2(P^2\tilde{q}_2)$. It gives that

\begin{equation}
N[p^S_{FF}(\tilde{q}_2)]\approx\log_2\left(\sqrt{\frac{27e}{32}}\right)-0.4447=0.154.
\end{equation} Then, for values of $P\ll1$, $nG^M \approx 0.154$ such that $nG^T > nG^M$.

In the other limiting case (ii), we have that

\begin{equation}
p^S_{FF}(\tilde{q}_2)\propto e^{-2\tilde{q}_2^2},
\end{equation} and

\begin{equation}
p^S_{NF}(\tilde{x}_2)\propto\int dv\,\delta\left(v+\frac{\tilde{x}_2}{P}\right)\sint^2v^2\propto\sint^2\frac{\tilde{x}_2^2}{P^2}.
\end{equation} Now, the negentropy of $p^S_{FF}(\tilde{q}_2)$ is null [$N[p^S_{FF}(\tilde{q}_2)]=0$]. The negentropy of the near-field marginal distribution will be the same of the function $\sint^2\frac{\tilde{x}_2^2}{P^2}$. It gives that

\begin{equation}
N[p^S_{NF}(\tilde{x}_2)]\approx\log_2\left(\sqrt{\frac{2\pi e A_2}{A_1^3}}\right)-0.7122 = 0.224
\end{equation}  Thus, for larger values of $P$, $nG^M \approx 0.224$ which is also smaller than $nG^T$.

Intermediate cases, for values of $0.002<P<3$, have been calculated numerically. The values of the negentropies of the far- and near-field marginal distributions are shown in Figure 3(b) of the main paper. In Figure 3(d), $nG^M$ is plotted in terms of $P$. As one can see, in such intermediate cases, one always have that $nG^T > nG^M$. For $P=3$ the value of $nG^M\approx0.175$. For values of $P>9$, $nG^M$ already tends to its maximal value which is $nG^M_{max}=0.224$ (See Figure \ref{Fig2sup} below). This proves that the non-Gaussianity of the spatial state of SPDC decreases under partial trace.
\end{proof}

\begin{figure}[h]
\centering
\includegraphics[angle=-90,width=0.6\textwidth]{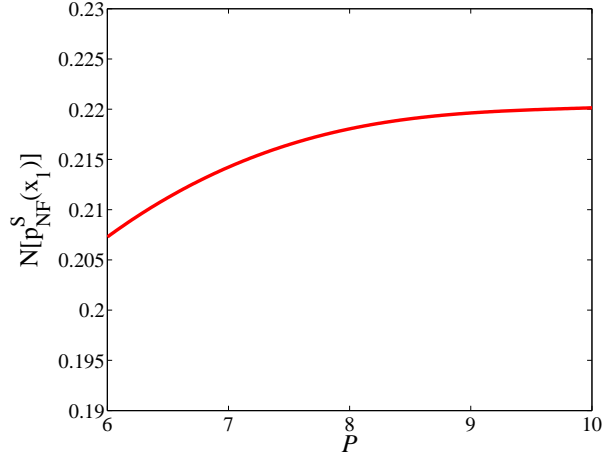}
\caption{Here we show the non-Gaussianity of the marginal distributions for larger values of $P$.}\label{Fig2sup}
\end{figure}

\section{Calculating the non-Gaussianity of the state of the spatially correlated down-converted photons using the QRE \label{SuplementoE}}

For comparison purposes, we calculated the value of the non-Gaussianity measure based on the QRE, $\delta_B$ \cite{Banaszek,Paris}. This was done by assuming the case of perfect phase-matching to simplify the calculation of the mean values in the covariance matrix. Then, we considered that no correlations are present between the position of one photon and the momentum of the other photon. Therefore, the cross joint probabilities $p(x_i,q_j)$ (where $i,j=1,2$ and $i\neq j$) are described by a product of the marginal position and momentum distributions $p(x_i)$ and $p(q_j)$. In such case the covariance matrix can be written, considering the order $\mathbf{V}(\tilde{x}_1,\tilde{q}_1,\tilde{x}_2,\tilde{q}_2)$, as

\begin{equation}
\mathbf{V} = \begin{pmatrix}
\frac{1}{4}\left(1 + \frac{4A_2 P^2}{A_1}\right) & 0 & \frac{1}{4}\left(1 - \frac{4A_2 P^2}{A_1}\right) & 0\\
0 & \frac{1}{4}\left(1 + \frac{3}{P^2}\right) & 0 &\frac{1}{4}\left(1 - \frac{3}{P^2}\right)\\
\frac{1}{4}\left(1 - \frac{4A_2 P^2}{A_1}\right) & 0 & \frac{1}{4}\left(1 + \frac{4A_2 P^2}{A_1}\right) & 0\\
0 & \frac{1}{4}\left(1 - \frac{3}{P^2}\right) & 0 &\frac{1}{4}\left(1 + \frac{3}{P^2}\right)
\end{pmatrix}.
\end{equation}

The Gaussian state associated to this covariance matrix has a purity of $\mu=0.44$ \cite{Paris}, and its Von-Neumann entropy reads $S=1.08$. Since the state of Eq~(\ref{EstMon}) is pure, we have that $\delta_B=1.08$, which does not depend on the value of $P$.

\section{$\mathbf{nG^T}$ at the limit when $\mathbf{P\ll 1}$\label{SuplementoF}}
Here we discuss the behavior of $nG^T$ when considering the limit of $P\ll1$, and that one of the down-converted photon is detected in transverse points around the origin. First we analyze the negentropy of far-field joint distribution:
\begin{equation}
N[p^S_{FF}(\tilde{q}_1,\tilde{q}_2)]=H[p^{\tilde G}_{FF}(\tilde{q}_1,\tilde{q}_2)]-H[p^S_{FF}(\tilde{q}_1,\tilde{q}_2)].
\end{equation}
In the limits considered, the differential entropies can be written as $H[p(\xi_1,\xi_2)]=H[p(\xi_1|\xi_2)]+H[p(\xi_2)]$, such that the negentropy of the far-field joint distribution may also be rewritten as
\begin{eqnarray}
N[p^S_{FF}(\tilde{q}_1,\tilde{q}_2)]&=&H[p^{\tilde G}_{FF}(\tilde{q}_1|\tilde{q}_2)]+H[p^{\tilde G}_{FF}(\tilde{q}_2)]\nonumber\\
& &-H[p^S_{FF}(\tilde{q}_1|\tilde{q}_2)]-H[p^S_{FF}(\tilde{q}_2)]\nonumber\\
&=&H[p^{\tilde G}_{FF}(\tilde{q}_1|\tilde{q}_2)]-H[p^S_{FF}(\tilde{q}_1|\tilde{q}_2)]\nonumber\\
& &+N[p^S_{FF}(\tilde{q}_2)].\label{nffc}
\end{eqnarray}
Note that $p^{\tilde G}_{FF}(\tilde{q}_1|\tilde{q}_2)$ does not have the same expected value and variance of the conditional probability density function $p^S_{FF}(\tilde{q}_1|\tilde{q}_2)$. Thus, in general, we can not define $H[p^{\tilde G}(\tilde{q}_1|\tilde{q}_2)]-H[p^S_{FF}(\tilde{q}_1|\tilde{q}_2)]$ as the negentropy of the conditional far-field distribution.

Considering the joint far-field distribution (\ref{psff}), the conditional distribution when $\tilde{q}_2=0$ will be given by
\begin{equation}
p^S_{FF}(\tilde{q}_1|\tilde{q}_2=0)\propto e^{-\frac{\tilde{q}_1^2}{2}}\sinc^2\frac{P^2\tilde{q}_1^2}{4}.
\end{equation}
When $P\ll 1$, the $\sinc$ function is much larger than the Gaussian function, such that we can approximate this distribution as
\begin{equation}\label{ffappr}
p^S_{FF}(\tilde{q}_1|\tilde{q}_2=0)\approx\frac{1}{\sqrt{2\pi}}e^{-\frac{\tilde{q}_1^2}{2}}.
\end{equation} Therefore, the differential entropy is
\begin{equation}\label{hpffcon}
H[p^S_{FF}(\tilde{q}_1|\tilde{q}_2=0)]\approx\frac{1}{2}\log_2(2\pi e).
\end{equation} Note that Eq.~(\ref{ffappr}) is a normal distribution with the expected value equal to zero and the variance equal to 1.

Now let's consider the Gaussian distribution $p^{\tilde G}_{FF}(\tilde{q}_1,\tilde{q}_2)$ of Eq.~(\ref{pgff}). It gives the conditional Gaussian distribution $p^{\tilde G}_{FF}(\tilde{q}_1|\tilde{q}_2)$ at $\tilde{q_2}=0$ equal to
\begin{equation}
p^{\tilde G}_{FF}(\tilde{q}_1|\tilde{q}_2=0)=\frac{\sqrt{3+P^2}}{\sqrt{6\pi}}e^{-\tilde{q}_1^2\frac{3+P^2}{6}}.
\end{equation}

In the case when $P\ll 1$, this Gaussian distribution has also the expected value equal to zero and the variance equal to 1. Thus, it corresponds to the Gaussian distribution with the same expected value and variance of $p^S_{FF}(\tilde{q}_1|\tilde{q}_2=0)$. Hence, \emph{in this limit}, we can define the negentropy of the far-field conditional distribution in terms of the $p^{\tilde G}_{FF}(\tilde{q}_1|\tilde{q}_2=0)$, such that
\begin{equation}
N[p^S_{FF}(\tilde{q}_1,\tilde{q}_2)]=N[p^S_{FF}(\tilde{q}_1|\tilde{q}_2)]+N[p^S_{FF}(\tilde{q}_2)].
\end{equation}

Now we consider the negentropy of the near-field joint distribution (\ref{psnf}). If $P\ll 1$, then the near-field conditional distribution is given by
\begin{equation}
p^S_{NF}(\tilde{x}_1|\tilde{x}_2=0)\approx\frac{1}{2PA_1}\sint^2\frac{\tilde{x}_1^2}{4P^2},
\end{equation} where we assumed $\sigma=1$ for simplicity.

The variance of this function is
\begin{equation}
(\Delta \tilde{x}_1|_{\tilde{x}_2=0})^2=\int d\tilde{x}_1 \,\tilde{x}_1^2p^S_{NF}(\tilde{x}_1|\tilde{x}_2=0)=\frac{4A_2 P^2}{A_1},
\end{equation} where $A_1$ and $A_2$ are the constants defined above. Now we consider the near-field Gaussian joint distribution $p^{\tilde G}_{NF}(\tilde{x}_1,\tilde{x}_2)$. The near-field conditional distribution associated with this Gaussian approximation, when $P\ll1$, is given by
\begin{equation}
p^{\tilde G}_{NF}(\tilde{x}_1|\tilde{x}_2=0)=\frac{\sqrt{A_1}}{2\sqrt{2\pi A_2}P}e^{-\frac{A_1}{8A_2P^2}\tilde{x}_1^2}.
\end{equation}

As one can note, $p^{\tilde G}_{NF}(\tilde{x}_1|\tilde{x}_2=0)$ has the same expected value and variance of $p^S_{NF}(\tilde{x}_1|\tilde{x}_2=0)$. Therefore, \emph{in this limit} we can write the negentropy of the near-field joint distribution as
\begin{equation}
N[p^S_{NF}(\tilde{x}_1,\tilde{x}_2)]=N[p^S_{NF}(\tilde{x}_1|\tilde{x}_2)]+N[p^S_{NF}(\tilde{x}_2)],
\end{equation} where $N[p^S_{NF}(\tilde{x}_1|\tilde{x}_2)]$ is defined in terms of the differential entropy of $p^{\tilde G}_{NF}(\tilde{x}_1|\tilde{x}_2)$.

Thus, when we consider $P\ll 1$ we can write the total non-Gaussianity of the spatial two-photon state of SPDC as
\begin{equation}
nG^T\approx nG^C+nG^M \textrm{  (iff $P\ll 1$)},
\end{equation} where $nG^C=N[p^S_{NF}(\tilde{x}_1|\tilde{x}_2)]+N[p^S_{FF}(\tilde{q}_1|\tilde{q}_2)]$, and $nG^M=N[p^S_{FF}(\tilde{q_j})]+N[p^S_{NF}(\tilde{x}_j)]$, with $j=1,2$.


\section*{Acknowledgments}
We thank M. Martinelli, J. L. Romero, C. Saavedra and L. Neves for discussions related with earlier drafts of this paper. We acknowledge the support of Grants FONDECYT 1120067, Milenio~P10-030-F and PFB~08024. C. H. M. acknowledges CNPq and CAPES. E. S. G. acknowledges the financial support of CONICYT.
\end{document}